%
\documentclass[11pt]{article}
\usepackage{chao}
\usepackage{hyperref}
\usepackage{cleveref}
\usepackage{cite}
\usepackage{verbatim}
\usepackage[T1]{fontenc}


\def\abs#1{\left| #1 \right|}

\DeclareMathOperator{\MST}{MST}

\begin{document}

\title{An Optimal Algorithm for the Stacker Crane Problem on Fixed Topologies\footnote{Supported by Science and Technology department of Sichuan Province under grant M112024ZYD0170.}} 
\author{Yike Chen\footnote{ \email{cyike9982@gmail.com}, University of Electronic Science and Technology of China.} \and Ke Shi\footnote{ \email{self.ke.shi@gmail.com}, University of Science and Technology of China.} \and Chao Xu\footnote{ \email{the.chao.xu@gmail.com}, University of Electronic Science and Technology of China.}
}
\date{}
\maketitle

\begin{abstract}
The Stacker Crane Problem (SCP) is a variant of the Traveling Salesman Problem. In SCP, pairs of pickup and delivery points are designated on a graph, and a crane must visit these points to move objects from each pickup location to its respective delivery point. The goal is to minimize the total distance traveled. SCP is known to be NP-hard, even on trees. The only positive results, in terms of polynomial-time solvability, apply to graphs that are topologically equivalent to a path or a cycle.

We propose an algorithm that is optimal for each fixed topology, running in near-linear time. This is achieved by demonstrating that the problem is fixed-parameter tractable (FPT) when parameterized by both the cycle rank and the number of branch vertices.
\end{abstract}

\section{Introduction}

In the stacker crane problem (SCP), a stacker crane must retrieve and deliver a set of \(m\) items from and to specified locations within a warehouse. The goal is to determine an optimal order for these operations, constructing a tour that completes the task while minimizing the total distance traveled. The problem is typically modeled as a problem on a weighted graph, where the stacker crane moves between vertices, traversing the edges.

The SCP was first studied by Frederickson et al. \cite{FredericksonHK78}, who provided an equivalent formulation involving a tour on a mixed graph and designed a \(9/5\)-approximation algorithm. Subsequently, there have been many studies focused on developing approximation algorithms for various SCP variants \cite{10.1007/978-3-030-92681-6_32,10.1007/978-3-319-71150-8_8,yu_approximation_2023}.

Regarding exact algorithms, Frederickson and Guan demonstrated that SCP is NP-hard even on trees \cite{FredericksonG93}, ruling out some of the simplest graph classes as candidates for polynomial-time solvability. However, Atallah and Kosaraju showed that when the input graph is a path, SCP can be reduced to a minimum spanning tree problem. Moreover, if the input graph is a cycle with \(n\) vertices, there exists an \(O(m+n\log n)\)-time exact algorithm \cite{AtallahK88}. Frederickson later improved the running time to match that of finding a minimum spanning tree, and also proved the algorithm to be optimal \cite{Frederickson93}.

Since paths and cycles are topologically equivalent to a single edge or a self-loop, respectively, one might conjecture that SCP is solvable in polynomial time for any fixed topology. Fixed topologies have practical significance: the layout of warehouses is usually fixed, often resembling a grid with a limited number of aisles \cite{ROODBERGEN200132,CAMBAZARD2018419}. For further details on various industrial applications of SCP, we refer readers to \cite{doi:10.1137/1.9781611973679}.

\paragraph*{Our Contributions}

We present an optimal algorithm for solving the SCP for fixed topologies. The idea is to design an FPT algorithm parameterized by the cycle rank and the number of branch vertices. Our algorithm generalizes Frederickson’s approach, reducing to his algorithm when the topology is a path or a cycle \cite{Frederickson93}. Because our algorithm is more general, it also serves as an alternative and more intuitive proof of the correctness of Frederickson's algorithm for paths and cycles as special cases \cite{Frederickson93}.


\section{Preliminaries}

We consider mixed graphs, which can contain both undirected edges and directed edges, the latter being referred to as \emph{arcs}. A graph that contains only arcs is called a directed graph, while one that contains only undirected edges is called an undirected graph. A vertex is termed a \emph{branch} vertex if its degree is at least 3.

For an arc \((s,t)\), \(s\) is the tail and \(t\) is the head. For an undirected edge, both vertices are tails and heads. A \emph{walk} is a sequence of edges such that the tail of the subsequent edge matches the head of the preceding edge. A \emph{tour} is a walk that starts and ends at the same vertex. Let \(\MST(m)\) denote the worst-case running time for finding a minimum spanning tree in a graph with \(m\) edges.

We now provide a graph-theoretical definition of SCP. In SCP, we are given a simple graph \(B=(V,E)\), called the \emph{base graph}, together with a list of \emph{requests} \(R\). A request is an ordered pair of vertices, represented as an arc. For simplicity, we assume that the requests are distinct, although our results can handle cases with duplicate requests.

We use the following formulation of SCP, as introduced by Frederickson et al. \cite{FredericksonHK78}.

\begin{problem}[Stacker Crane Problem (SCP)]
  Given a simple graph \(B=(V,E)\) on $n$ vertices and a set of $p$ arcs \(R\), with a cost function \(c:E\cup R \to \R^+\), find a min-cost tour in the mixed graph \(G=(V,E\cup R)\) that traverses each arc in \(R\) exactly once.
\end{problem}

Intuitively, by assigning the cost of an arc \((s,t) \in R\) to be the length of the shortest \(st\)-path in \(B\), traversing an arc represents moving an object from \(s\) to \(t\). We emphasize that this ensures the object must be delivered immediately once it is picked up.

An \emph{edge subdivision} operation involves inserting a new vertex in the middle of an edge. Specifically, given an edge \(uv\), we delete the edge, add a new vertex \(w\), and connect \(u\) and \(v\) via \(w\) by adding edges \(uw\) and \(wv\). A graph \(H\) is called a \emph{subdivision} of a graph \(G\) if \(H\) can be obtained through a sequence of edge subdivision operations starting from \(G\). Two graphs, \(G_1\) and \(G_2\), are \emph{topologically equivalent} if $G_1$ and $G_2$ are subdivisions of the same graph $G$. Our goal is to show that SCP for a fixed topology can be solved in optimal time.

\begin{problem}[Stacker Crane Problem with Fixed Topology (SCP\((H)\))]
  Given a simple graph \(B=(V,E)\) on $n$ vertices that is topologically equivalent to a graph \(H\), and a set of $p$ arcs \(R\), with a cost function \(c:E\cup R \to \R^+\), find a min-cost tour in the mixed graph \(G=(V,E\cup R)\) that traverses each arc in \(R\) exactly once.
\end{problem}

\subsection{Circulations}
Consider a mixed graph \( G = (V, E \cup A) \) with a set of undirected edges \( E \) and directed arcs \( A \). An arbitrary orientation is assigned to each undirected edge to define a forward direction, which serves solely as a notational convenience and has no bearing on the results. We define \( \delta^-(v) \) and \( \delta^+(v) \) as the sets of in-edges/in-arcs and out-edges/out-arcs of a vertex \( v \), respectively. Each edge and arc has an associated lower bound, upper bound, and cost function, denoted as \( \ell, u: E \cup A \to \mathbb{Z} \) and \( c: E \cup A \to \mathbb{R} \).

A \emph{circulation} is a function \( f: E \cup A \to \mathbb{Z} \) such that \(\sum_{e \in \delta^-(v)} f(e) = \sum_{e \in \delta^+(v)} f(e)\) for every vertex \( v \in V \). The circulation \( f \) is \emph{feasible} with respect to the lower bounds \( \ell \) and upper bounds \( u \) if \( \ell(e) \leq f(e) \leq u(e) \) for every edge \( e \in E \cup A \). Typically, the values of \( \ell \) and \( u \) are clear from the context. 

The cost of a circulation \( f \) under the cost function \( c \) is given by \( \sum_{a \in A} f(a)c(a) + \sum_{e \in E} |f(e)|c(e) \). A feasible circulation with the minimum cost is called a \emph{min-cost circulation}. An edge or arc \( e \) is termed \emph{fixed} if \( \ell(e) = u(e) \). An edge \( e \) is considered \emph{unconstrained} if \( \ell(e) = -\infty \) and \( u(e) = \infty \). An arc is \emph{unconstrained} if \( \ell(e) = 0 \) and \( u(e) = \infty \). The \emph{support} of \( f \), denoted by \( \supp(f) \), is defined as \( \{ e \mid f(e) \neq 0, e \in E \cup A \} \). 

For a circulation \( f \), we use \( [f] \) to denote the set of all tours corresponding to \( f \) in a natural way: a tour where \( f(e) \) represents the difference between the number of times the tour traverses the edge in the forward and backward directions. The set \( [f] \) is referred to as a \emph{homology class}, and two tours in \( [f] \) are said to be \emph{homologically equivalent}.

For the SCP problem with input graph \( G = (V, E \cup R) \), the \emph{corresponding circulation problem} is a circulation problem on \( G \) where each edge in \( E \) is unconstrained, and each arc in \( R \) is fixed with a flow value of 1. The costs of the edges and arcs remain the same as in the original SCP formulation. 
The SCP problem and the corresponding circulation problem are not equivalent. A gap exists due to connectivity. A tour must be connected, but the corresponding circulation does not guarantee connectivity by simply looking at the support.

\subsection{Cycle space}
For an undirected graph $G = (V, E)$, the functions $f: E \to \mathbb{Z}$ can be represented as vectors in $\mathbb{Z}^E$. The space of circulations is a submodule of $\mathbb{Z}^E$, known as the \emph{integral cycle space}, or simply the cycle space. If the graph is connected, the dimension of the cycle space is $|E| - |V| + 1$ \cite{Bollobas2013-ad}. This value, $|E| - |V| + 1$, is referred to as the \emph{cycle rank}. Any two topologically equivalent graphs have the same cycle rank.

A circulation $f$ is called a \emph{cycle flow} if $\supp(f)$ forms an undirected cycle (after considering all edges as undirected). A cycle flow $f$ is a \emph{unit cycle flow} if $|f(e)| = 1$ for all $e \in \supp(f)$. Let $T$ be a spanning tree in $G$. For each edge $uv \notin T$, the \emph{fundamental cycle} $C_{uv}$ with respect to $T$ is defined as the unit cycle flow that traverses $uv$ in the forward direction and then traces back from $v$ to $u$ along the unique path in $T$. Note that it is possible that some fundamental cycles have a negative flow value on some edges. 

The fundamental cycles with respect to $T$ form a basis for the cycle space. That is, let $C_1, \ldots, C_r$ be the fundamental cycles with respect to $T$. Then, for any circulation $f$, we have $f = \sum_{i=1}^r \lambda_i C_i$ for some coefficients $\lambda \in \mathbb{Z}^r$ \cite{LIEBCHEN2007337}.

\subsection{Techniques}
Atallah and Kosaraju solved SCP$(H)$ for the cases where $H$ is either a path or a cycle \cite{AtallahK88}. Using a more modern and general perspective, their approach can be summarized as a three-step algorithm:

\begin{enumerate}
    \item Find a min-cost circulation $f$ in the corresponding circulation problem.
    \item Find all circulations $g$ such that $\|f-g\|_\infty \leq b$ for some bound $b$.
    \item Find a min-cost tour in each homology class $[g]$, and return the minimum.
\end{enumerate}

It is crucial that $H$ is a path or a cycle, as their technique in each of these steps relies heavily on this property. We describe how to eliminate such a restriction.

In the first step, finding a min-cost circulation when $H$ is either a path or a cycle was solved through an ad-hoc process in \emph{linear time}. On a path, such a circulation is unique. For a cycle, they fix a flow sent along a particular edge, and then the circulation becomes unique. Alternatively, one can use a min-cost flow algorithm on a one-tree \cite{Kalantari1995}, which can also obtain a linear time algorithm. We show that min-cost circulation can be solved in linear time for arbitrary fixed $H$ by reducing it to a problem similar to linear programming in a fixed dimension. See \autoref{sec:mincostcirc}.

For the second step, Atallah and Kosaraju showed that $b = 0$ suffices for paths, leading to an optimal algorithm. In the same paper, they established a bound of $b = p$ for cycles, where $p$ is the number of requests. Fortunately, their technique does not explicitly generate all circulations, but instead searches through them using a binary search, implemented alongside step 3. Later, Frederickson improved the bound on $b$ to $1$ by squeezing the value of the min-cost tour between two one-dimensional functions, thereby achieving the desired proximity result \cite{Frederickson93}. This improvement not only made the algorithm faster but also allowed steps 2 and 3 to be separated.

In general, it is not difficult to argue that for any $H$, a bound of $b = p$ suffices. The number of feasible circulations in step 2 would then be $O((2p+1)^r)$, where $r$ is the cycle rank of $B$. This is too large to yield a fast algorithm. Unfortunately, Frederickson's technique does not generalize to graphs with higher cycle ranks. Therefore, we need a proximity theorem for arbitrary graphs. This is the most technical part of our paper. As we will show in \autoref{sec:proximityresult}, a bound of $b = r$ suffices. The algorithm to explicitly enumerate $O((2r+1)^r)$ such circulations will be described in \autoref{sec:enum}.

For the third step, Atallah and Kosaraju found the min-cost tour in a homology class by computing a minimum spanning tree in an auxiliary graph. The auxiliary graph is constructed by contracting edges in the circulation that defines the homology class. Once again, computing a minimum spanning tree suffices only because $H$ is a path or a cycle. In \autoref{sec:tourinhomo}, we show that the correct problem to solve is \emph{minimum Steiner tree} in the auxiliary graph, which is an NP-hard problem in general. However, since the number of Steiner branch vertices depends only on $H$, the running time remains efficient.

\section{Proximity result}\label{sec:proximityresult}
Let $B = (V, E)$ be the base graph. Consider the mixed graph $G = (V, E \cup R)$ in the SCP problem and the corresponding circulation problem on $G$. Assume $f$ is the min-cost circulation in $G$. We will show that there exists a circulation $g$ in $G$ such that the homology class $[g]$ contains the min-cost tour, and $\|f - g\|_\infty \leq r$, where $r$ is the cycle rank of $B$.

For a unit cycle flow $C$, we call $kC$ a cycle flow of value $k$. A circulation $f$ is said to contain a circulation $g$ if $|g(e)| \leq |f(e)|$ for each edge $e$, and $g(e)$ and $f(e)$ have the \emph{same sign}. 

\begin{definition}
  A circulation is  \emph{elementary} if it does not contain a cycle flow of value $2$. 
\end{definition}

Intuitively, this means that after cancellations, no part of the flow circulates around a cycle more than once. In fact, we prove the proximity result by showing that $f - g$ is an elementary circulation in $B$, the undirected base graph.

\subsection{Elementary Circulations}

For a graph or mixed graph $G$, let $G' = \text{sym}(G)$ be the directed graph obtained by replacing each edge with two opposing arcs. Interestingly, we prove the proximity result by considering circulations in the residual graph of $G'$, as it is easier to reason about.

The \emph{residual graph} for a given directed graph $G$ with lower and upper bounds $\ell$ and $u$, and a feasible circulation $f$, is the directed graph $G_f$ along with new lower and upper bounds $\ell_f$ and $u_f$. For each arc $a \in A$, define $u_f(a) = u(a) - f(a)$ and $\ell_f(a) = \ell(a) - f(a)$. The residual graph $G_f$ consists of all arcs for which either $u_f(a) \neq 0$ or $\ell_f(a) \neq 0$. Note that the lower bounds in the residual graph do not have to be non-negative. However, $\ell_f(a) \leq 0 \leq u_f(a)$ is always satisfied, meaning that the zero circulation is always feasible. The fixed arcs in the original graph do not appear in the residual graph, so the complexity of the residual graph can be much smaller than the original graph.

One can observe that the SCP can be reduced to a min-cost \emph{connected} circulation problem on $G' = (V, A \cup R) = \text{sym}(G)$. Here, $A$ are the arcs obtained from symmetrizing $E$. For arcs in $A$, the lower bound is $0$ and the upper bound is infinity. The cost of each arc is the cost of the corresponding edge. For arcs in $R$, both the lower bound and upper bound are $1$. If $f'$ is a min-cost connected circulation in $G'$, then it corresponds to a min-cost tour of the same cost in $G$.

\begin{theorem}\label{thm:diffminconn}
  Let $f$ be a min-cost circulation in $G = (V, E \cup R)$, where the flows on arcs in $R$ are fixed to $1$, and edges $E$ are unconstrained. There exists a circulation $g$ such that the min-cost tour is in the homology class $[g]$ and $g - f$ is an elementary circulation in the base graph $B = (V, E)$.
  \end{theorem}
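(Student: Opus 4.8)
The plan is to take any min-cost tour, realize it as a min-cost \emph{connected} circulation in the symmetrized graph, and compare it against the given min-cost circulation $f$ through the lens of the residual graph. Let $t$ be a min-cost tour, and let $g$ be the circulation whose homology class $[g]$ contains $t$. Both $f$ and $g$ are feasible circulations in the corresponding circulation problem, so they agree on the fixed arcs of $R$ (both put flow $1$ there), and hence their difference $g-f$ is a circulation supported entirely on the undirected base graph $B=(V,E)$. The claim to be proved is that after a suitable choice of $g$ among all circulations whose class contains a min-cost tour, this difference $g-f$ is \emph{elementary}, i.e.\ it contains no cycle flow of value $2$.

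First I would set up the comparison in the residual graph $G'_f$, where $G' = \mathrm{sym}(G)$. The key conceptual move, signalled by the excerpt, is that $g-f$ can be read as a circulation in the residual graph of the symmetrized graph at $f$; decomposing it into unit cycle flows is the natural tool. Every integral circulation in an undirected graph decomposes as a signed sum of cycle flows along fundamental cycles, so $g - f = \sum_i \lambda_i C_i$. The quantity $\|g-f\|_\infty$ being large is exactly the statement that some edge carries a cycle flow of value at least $2$ after cancellations, i.e.\ that $g-f$ is \emph{not} elementary. So the elementary conclusion is equivalent to the proximity bound by an edge, which is why proving elementarity is the right target.

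The heart of the argument is an \textbf{exchange / uncrossing step}. Suppose toward contradiction that every min-cost-tour-containing circulation $g$ yields a non-elementary $g-f$. Fix a $g$ and let $D = g - f$ contain a cycle flow $2C$ for some unit cycle flow $C$. I would push $f$ and $g$ toward each other along $C$: form $g' = g - C$ (equivalently cancel one copy of $C$ from the difference) and argue two things. First, $g'$ is still a feasible circulation in the corresponding circulation problem (it is unchanged on $R$, since $C$ lives in $B$, and edges in $E$ are unconstrained, so feasibility is automatic). Second — and this is where minimality of $f$ enters — the cost cannot strictly increase in a way that breaks optimality: because $f$ is a \emph{min-cost} circulation, sending flow along $C$ away from $g$ toward $f$ does not increase cost, using the standard fact that $f$ is min-cost iff its residual graph has no negative-cost cycle. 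The remaining obligation is that $g'$ still hosts a min-cost tour, i.e.\ that $[g']$ contains a tour of the same cost as $t$; this is a homology/connectivity argument showing that reducing the winding along $C$ by one unit preserves both the connectivity needed for a tour and the tour's cost.

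The step I expect to be the main obstacle is precisely this last point: guaranteeing that after cancelling a cycle flow the new class $[g']$ still contains a \emph{connected} circulation realizing a min-cost tour, rather than merely a feasible circulation. Connectivity is exactly where SCP and the bare circulation problem diverge (as the preliminaries emphasize), so the delicate part is to perform the cancellation without disconnecting the tour or raising its cost. I would handle this by choosing the cycle flow $C$ to cancel carefully — picking one whose edges are already traversed multiply by $t$, so that decrementing the winding number still leaves enough traversals to keep the underlying multigraph connected and Eulerian — and by invoking optimality of $f$ to control the cost so that no cheaper-but-disconnected configuration is wrongly preferred. Iterating the cancellation until no cycle flow of value $2$ remains terminates because each step strictly decreases a nonnegative potential such as $\|g-f\|_1$, and the terminal $g$ satisfies the theorem.
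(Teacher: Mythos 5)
Your overall architecture is the same as the paper's: pick $g$ extremally (the paper minimizes $\|g-f\|_1$ over all classes containing a min-cost tour, which is your iteration packaged as a single extremal choice), assume a unit cycle flow $C$ with $2C$ contained in $\Delta = g-f$, cancel it, use the fact that the residual graph of the min-cost circulation $f$ has no negative-cost cycle to control the cost, and then fight for connectivity. You also correctly identify connectivity as the crux. The gap is in how you propose to resolve it. Your plan is to choose $C$ ``whose edges are already traversed multiply by $t$,'' so that decrementing still leaves the underlying multigraph connected. But $2C$ being contained in $\Delta = g - f$ only guarantees $|g(e)-f(e)|\geq 2$ on each edge of $C$; it does \emph{not} guarantee $|g(e)|\geq 2$. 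For instance $f(e)=-1$ and $g(e)=+1$ gives $\Delta(e)=2$ while the tour may traverse $e$ exactly once. So a cycle with the property you need may simply not exist, and naively subtracting $C$ from $g$ can zero out an edge of $\supp(g)$ and disconnect the tour.

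The paper's resolution is the missing idea: work in $G'=\mathrm{sym}(G)$ where $g'$ is nonnegative, and route the cancelling cycle $C'$ \emph{per edge} through one of the two opposing arcs $e^+,e^-$. If $\Delta'(e^+)\geq 2$, set $C'(e^+)=1$; then $g'(e^+)=f'(e^+)+\Delta'(e^+)\geq 2$, so after subtraction the arc still carries flow at least $1$. Otherwise $\Delta'(e^-)\leq -1$, so $f'(e^-)\geq 1$ (making the residual arc available, which is also what legitimizes the no-negative-cycle cost argument) and one sets $C'(e^-)=-1$, which \emph{adds} a unit to $e^-$ rather than removing anything. In both cases $\supp(g'-C')\supseteq\supp(g')$, so connectivity is free, and the cost of $C'$ is nonnegative because it is a cycle in the residual graph of the min-cost $f'$. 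Without this per-edge choice between ``decrement forward'' and ``increment backward,'' your cancellation step is not guaranteed to preserve a connected representative, so as written the argument does not go through.
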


  \begin{proof}
  Let $\Delta = g - f$. We choose $g$ such that a min-cost tour is contained in $[g]$, and out of all such $g$, one that minimizes $\|\Delta\|_1$. Note that $\Delta$ is a circulation in $B$ because $f$ and $g$ agree on the fixed arcs in $R$, hence $\Delta$ is zero outside of $E$.
  
  Let $G' = \text{sym}(G)$, and let $f'$ be the min-cost circulation in $G'$, which corresponds to the min-cost circulation $f$ in $G$ in a natural way. 
  Specifically, for each edge $uv$, let $(u,v)$ denote the positive orientation.
  Define $f'(u,v) = f(uv)$ and $f'(v,u) = 0$ if $f(uv) \geq 0$
 and otherwise set $f'(u,v) = 0$ and $f'(v,u) = -f(uv)$. 
 This ensures that $f(uv) = f'(u,v) - f'(v,u)$.
  Let $g'$ be the min-cost connected circulation in $G'$, such that $g(uv) = g'(u,v) - g'(v,u)$, again using the positive orientation $(u,v)$. Let $\Delta' = g' - f'$, which is a circulation in $B' = \text{sym}(B)$.
  
  Consider the residual graph $G'_{f'}$. Certainly, $\Delta'$ is a feasible circulation in $G'_{f'}$ because $f' + \Delta' = g'$ is a feasible circulation in $G'$.
  We label the arcs in $G'_{f'}$ as follows: for each edge $e \in E$,
  label two corresponding arcs as $e^+$ and $e^-$ such that 
  $\Delta'(e^+) - \Delta'(e^-) \geq 0$, which
  implies $\Delta'(e^+) - \Delta'(e^-) = \abs{g'(e^+)-f(e^+)-g'(e^-)+f(e^-)} = \abs{g(e)-f(e)}  =\abs{\Delta(e)}$.
  
%

  Assume $\Delta$ is not elementary. Then, there exists a unit cycle flow $C$ such that $2C$ is contained in $\Delta$. In terms of $G'_{f'}$, this would imply that $\Delta'(e^+) - \Delta'(e^-) \geq 2$ for each $e \in C$. We consider a cycle flow $C'$ in $G'_{f'}$ whose undirected version is $C$, and which uses the maximum number of edges with positive labels. That is, for each edge $e$, we set $C'(e^+) = 1$ and $C'(e^-)=0$ if $\Delta'(e^+) \geq 2$, 
  otherwise, since $\Delta'(e^+) \leq 1$, it follows that $\Delta'(e^-) \leq \Delta'(e^+)-2 \leq -1$,   
  set $C'(e^+) = 0$ and $C'(e^-)= -1$.
  
  Certainly, $\Delta' - C'$ is feasible in $G'_{f'}$. The cost of $C'$ cannot be negative, since the residual graph of a min-cost circulation cannot contain a negative cost cycle \cite{AhujaMO93}. This implies that the cost of $g' - C'$ is no greater than the cost of $g'$.
  
  Additionally, we need to show that $\supp(g' - C')$ is connected. For each $e \in \supp(C)$, there are two cases:
  
  \begin{enumerate}
      \item If $C'(e^+) = 1$, then $\Delta'(e^+) \geq 2$. Therefore, $g'(e^+) - C'(e^+) = f'(e^+) + \Delta'(e^+) - 1 \geq 1$. Hence, $e^+ \in \supp(g' - C')$.
      \item If $C'(e^-) = -1$, then $\Delta'(e^-) \leq -1$. Therefore, $g'(e^-) - C'(e^-) = g'(e^-) + 1 \geq 1$. Hence, $e^- \in \supp(g' - C')$.
  \end{enumerate}
  
  Since $\supp(g' - C')$ would include at least one of $e^+$ or $e^-$ for each $e \in C$, if $\supp(g')$ is connected, then $\supp(g' - C')$ is also connected. Because $g' - C'$ is a connected circulation of no greater cost than $g'$, we have that $[g - C]$ also contains a min-cost tour. However, $\|\Delta - C\|_1 < \|\Delta\|_1$ because $\Delta$ contains $C$. This contradicts our choice of $\Delta$ having minimal $L_1$ norm.
  \end{proof}

  \subsection{Elementary Circulations and $L_\infty$ Norm}

  In this section, we bound the number of elementary circulations in a \emph{graph} using its cycle rank.

  The idea is to show that for an elementary circulation $f$, we have $\|f\|_\infty \leq r$, where $r$ is the cycle rank, by expressing the circulation as the sum of $r$ cycle flows. We then use the fact that no flow traverses a cycle more than once to establish the desired bound.

  However, a naive decomposition might result in a cycle flow with values greater than $2$, even though the overall circulation remains elementary due to cancellations.

  To avoid this issue, we transition to non-negative circulations, ensuring that everything we analyze is non-negative. This simplifies the argument, as it prevents flow cancellations on edges when summing flows around cycles.

  First, we strengthen the well-known flow decomposition theorem for circulations, which typically states that a circulation can be decomposed into $m$ cycle flows, where $m$ is the number of edges \cite{Williamson_2019}. We show that this can be improved to $r$, the cycle rank of the graph.

  \begin{proposition}\label{thm:flowdecomposition}
  Every non-negative circulation on a graph with cycle rank $r$ can be decomposed into at most $r$ non-negative cycle flows. That is, $f = \sum_{i=1}^r \lambda_i C_i$, where $C_1, \ldots, C_r$ are non-negative cycle flows and $\lambda_1, \ldots, \lambda_r \geq 0$.
  \end{proposition}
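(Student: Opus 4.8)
The plan is to prove a slightly stronger statement by induction and then specialize. Specifically, I would show that any non-negative circulation $f$ can be decomposed into at most the cycle rank of its own support-subgraph $\supp(f)$ many non-negative cycle flows. The proposition then follows immediately: since $\supp(f) \subseteq E$ and the integral cycle space of a subgraph is a submodule of the cycle space of the ambient graph, the cycle rank of $\supp(f)$ is at most $r$, the cycle rank of $G$.

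The induction runs on the cycle rank $r'$ of the subgraph induced by $\supp(f)$. For the base case $r' = 0$, the support is a forest, and I would argue that a non-negative circulation supported on a forest must be identically zero: a leaf of the support has a unique incident edge, and flow conservation forces the flow on that edge to vanish, contradicting its membership in the support; peeling leaves repeatedly shows $f \equiv 0$, so zero cycle flows suffice.

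For the inductive step with $r' \geq 1$, I would first establish that a nonzero non-negative circulation always contains a directed cycle in its support. Starting from any edge carrying positive flow, conservation at its head guarantees an out-edge (in the chosen orientation) with positive flow, so a directed walk can be extended indefinitely and, with finitely many vertices, must revisit a vertex and close a directed cycle $C$. Setting $\lambda = \min_{e \in C} f(e) > 0$ and $f' = f - \lambda C$ keeps $f'$ a non-negative circulation and zeroes out at least one edge $e^\star \in C$. The crucial bookkeeping is on the cycle rank: because $e^\star$ lies on a cycle it is not a bridge of $\supp(f)$, so deleting it lowers the cycle rank by exactly one; combined with monotonicity of cycle rank under passage to subgraphs and the containment $\supp(f') \subseteq \supp(f) \setminus \{e^\star\}$, the cycle rank of $\supp(f')$ is at most $r' - 1$. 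Applying the induction hypothesis to $f'$ gives a decomposition into at most $r' - 1$ non-negative cycle flows, and appending $\lambda C$ produces at most $r'$ of them for $f$.

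The main obstacle I anticipate is the cycle-rank accounting rather than the flow decomposition itself, which is classical. I need the two facts that (i) deleting a non-bridge edge lowers the cycle rank by exactly one, and (ii) cycle rank is monotone under taking subgraphs, the latter following from the submodule structure of the cycle space. Care is also needed to guarantee that the cycle $C$ extracted at each step is a genuine \emph{directed} cycle, so that the induced unit cycle flow is non-negative and $\lambda C$ is a legitimate non-negative cycle flow of value $\lambda$; this is exactly why I pass through the oriented view of $\supp(f)$ when locating $C$.
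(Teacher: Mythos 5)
Your proof is correct and follows essentially the same strategy as the paper's: induct on the cycle rank, peel off a cycle flow $\lambda C$ that saturates at least one edge, and verify that the cycle rank of what remains drops by at least one. The only differences are cosmetic bookkeeping --- you delete a single non-bridge edge and invoke monotonicity of cycle rank under subgraphs, where the paper deletes all zeroed edges at once and counts components --- and you additionally justify (via the flow-conservation walk) that a nonzero non-negative circulation contains a positively oriented cycle, a step the paper leaves implicit.
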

  
  \begin{proof}
  We prove this by induction on the cycle rank $r$. 
  
  If the cycle rank is $0$, then the graph contains no cycles, and $f$ must be zero everywhere. 
  
  Now, consider the case where the cycle rank is $r$. We can assume the graph is connected; otherwise, we can apply the proof to each connected component separately. Suppose there exists a cycle $C$ with positive flow. Let $\lambda = \min_{e \in C} f(e)$. We include $\lambda C$ as a term in the decomposition.
  
  Let $D$ be the set of edges with zero flow after reducing the flow on $C$ by $\lambda$. Assume that $G - D$ has $k$ components. Observe that $k \leq |D| + 1$. If $k = |D| + 1$, this would imply that removing each edge in $D$ disconnects the graph, contradicting the fact that $G$ contains the cycle $C$. Hence, $k \leq |D|$.
  
  Let the components of $G - D$ be $V_1, \ldots, V_k$. By the inductive hypothesis, the remaining circulation can be decomposed into $\sum_{i=1}^k \left( |E_i| - |V_i| + 1 \right) = (m - |D|) - n + k \leq m - n = r - 1$ non-negative cycle flows. Therefore, we can decompose $f$ into $r$ non-negative cycle flows.
  \end{proof}

  Next, we show that elementary circulations have a small $L_\infty$ norm.
  
  \begin{proposition}\label{thm:elementaryproximity}
  Let $f$ be an elementary circulation in a graph with cycle rank $r$. Then, $\|f\|_\infty \leq r$.
  \end{proposition}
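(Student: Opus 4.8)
The plan is to reduce to the case of a \emph{non-negative} circulation, where flow decomposition behaves well, and then read off the bound from the number of cycles in the decomposition. First I would reorient each edge of the graph in the direction of its flow: delete every edge on which $f$ vanishes, and for each surviving edge $e$ flip its reference orientation whenever $f(e) < 0$. This produces a directed graph $\vec{B}$ whose underlying undirected graph is the subgraph consisting of the edges in $\supp(f)$, together with a circulation $\tilde f$ that is non-negative on every arc and satisfies $|\tilde f(e)| = |f(e)|$ edgewise, so that $\|\tilde f\|_\infty = \|f\|_\infty$. Since deleting an edge never increases the cycle rank (deleting a cycle edge lowers it by one, deleting a bridge leaves it unchanged), the cycle rank $r'$ of $\vec B$ satisfies $r' \le r$.

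Next I would apply the refined decomposition of \autoref{thm:flowdecomposition} to $\tilde f$, writing it as a sum of at most $r'$ terms $\lambda_i C_i$, where each $C_i$ is a non-negative unit cycle flow and each $\lambda_i$ is a positive integer (arising as the minimum of positive integer flow values, as in the proof of \autoref{thm:flowdecomposition}). The crux is to show every $\lambda_i \le 1$. Suppose instead $\lambda_j \ge 2$ for some $j$. Because all terms of the decomposition are non-negative, there is no cancellation: for each edge $e \in \supp(C_j)$ we have $\tilde f(e) = \sum_i \lambda_i [e \in \supp(C_i)] \ge \lambda_j \ge 2$, so $2C_j$ is contained in $\tilde f$ (the magnitudes dominate and all signs agree since everything is non-negative). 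This exhibits a cycle flow of value $2$ inside $\tilde f$, contradicting elementarity. Once all $\lambda_i \le 1$, for every edge $e$ we obtain $\tilde f(e) = \sum_{i : e \in \supp(C_i)} \lambda_i \le \#\{i : e \in \supp(C_i)\} \le r' \le r$, and hence $\|f\|_\infty = \|\tilde f\|_\infty \le r$.

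The main obstacle is precisely the phenomenon flagged just before the statement: decomposing the signed circulation $f$ directly could produce a single cycle flow with coefficient $2$ or more while $f$ remains elementary, because opposing flow on shared edges cancels out. Passing to $\tilde f$ is what legitimizes the ``coefficient $\le 1$'' argument, since in $\sum_i \lambda_i C_i$ no edge ever loses contribution. The one subtlety to verify is that elementarity transfers between $f$ and $\tilde f$: reorientation flips the sign of $f$ and of any candidate cycle flow simultaneously on the same edges while preserving absolute values, so the containment relation ``$2C$ is contained in the circulation'' is preserved in both directions, and $\tilde f$ is elementary iff $f$ is. I would also record the minor fact, used implicitly above, that a non-negative cycle flow is automatically a positive integer multiple of a unit cycle flow, since conservation forces the flow to be constant around its supporting cycle.
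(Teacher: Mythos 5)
Your proposal is correct and follows essentially the same route as the paper: make $f$ non-negative by reorienting edges, apply \autoref{thm:flowdecomposition} to get at most $r$ non-negative cycle flows, use elementarity to force each coefficient to be $1$, and sum edgewise. You are in fact more careful than the paper's own proof, which compresses your $\lambda_j \ge 2$ contradiction into the single sentence ``each of the $r$ cycle flows is a unit cycle flow'' and does not spell out that elementarity is preserved under reorientation.
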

  
  \begin{proof}
  Let $f$ be an elementary circulation. Assume $f$ is non-negative; if not, we reverse the orientation of the edges with negative flow so that $f$ becomes non-negative. By \Cref{thm:flowdecomposition}, $f$ can be decomposed into $r$ non-negative cycle flows. Since $f$ is an elementary circulation, each of the $r$ cycle flows is a unit cycle flow. Hence, for each edge $e$, we have $f(e) = \sum_{i=1}^r C_i(e) \leq \sum_{i=1}^r 1 = r$. Therefore, $\|f\|_\infty \leq r$.
  \end{proof}

The bound in \autoref{thm:elementaryproximity} is tight. Consider a cycle rank $r$ graph with $2(r+1)$ vertices that consists of a path $P=v_1,\ldots,v_{2r}$. There are also edges $e_{r-i}$ between $v_{2r-i}$ and $v_{1+i}$ for $i\leq r$. See \autoref{fig:tightgraph}. It has cycle rank $r$. Let $C_i$ be the fundamental cycle with respect to the path $P$ using edge $e_i$ in the smaller to larger vertex direction, where $i\geq 1$. Consider the circulation $f=\sum_{i=1}^r C_i$, which is elementary, and $f(e_0)=r$.

\begin{figure}[ht]
  \centering
  \includegraphics{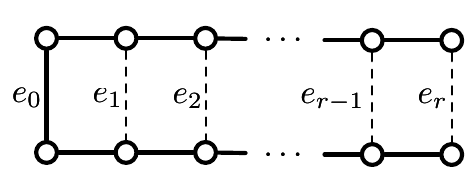}
  \caption{Example graph where there is an elementary circulation where $\|f\|_\infty=r$.}
  \label{fig:tightgraph}
\end{figure}

\section{The new algorithm}\label{sec:algo}

This section describes the optimal algorithm up to constants for SCP$(H)$.

\subsection{Min-Cost Circulation}\label{sec:mincostcirc}

Recall that we are interested in finding the min-cost circulation $f$ of $G = (V, E \cup R)$, where $f(a) = 1$ for all $a \in R$. The min-cost circulation problem was shown to be solvable in $m^{1+o(1)}$ time, where $m$ is the number of edges in the graph \cite{9996881}. In our case, the graph has $p + m$ edges. The min-cost flow algorithm on a one-tree gives a linear time algorithm for the special case when the base graph have cycle rank $1$ \cite{Kalantari1995}. 
We can show that there is a linear time algorithm for base graphs with fixed cycle rank. 

\begin{theorem}\label{thm:mincostcycle}
Let $r$ be a fixed number. Consider a min-cost circulation problem where the edges are either fixed or unconstrained, and the unconstrained edges form a graph with cycle rank $r$. The min-cost circulation can be found in $O(m)$ time.
\end{theorem}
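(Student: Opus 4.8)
The plan is to reduce the task to minimizing a convex piecewise-linear function over $\mathbb{R}^r$ and then to solve that low-dimensional optimization in linear time. Let $B=(V,E)$ be the graph formed by the unconstrained edges, which has cycle rank $r$, and let the remaining edges and arcs be fixed. Since the fixed edges and arcs carry prescribed flow, they contribute a constant to the total cost and impose a fixed divergence $d(v)$ at every vertex $v$, namely the net flow that the unconstrained edges must supply in order to satisfy conservation. Any circulation of the whole graph is thus obtained by choosing a flow $f_E$ on $E$ with prescribed divergence $-d$, and the set of such $f_E$ is a coset of the cycle space of $B$, i.e.\ an affine subspace of dimension exactly $r$. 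Fixing a spanning tree $T$ of $B$ together with its fundamental cycles $C_1,\dots,C_r$ as a basis of the cycle space, every feasible $f_E$ can be written as $f_E = f_0 + \sum_{i=1}^r \lambda_i C_i$ for some $\lambda\in\mathbb{R}^r$, where $f_0$ is any particular solution and $f_E-f_0$ is a circulation. Up to the additive constant coming from the fixed part, the total cost becomes
\[
  g(\lambda) \;=\; \sum_{e \in E} c(e)\,\Bigl| f_0(e) + \sum_{i=1}^r \lambda_i\, C_i(e) \Bigr|,
\]
a weighted sum of $\abs{E}$ absolute values of affine forms in $\lambda$, hence a convex piecewise-linear function on $\mathbb{R}^r$.

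First I would carry out the preprocessing, all in $O(m)$ time: compute the divergences $d(v)$ in one pass over the fixed edges and arcs; build a spanning tree $T$ of $B$ and identify its $r$ non-tree edges; compute a particular solution $f_0$ by routing the demands on $T$ with a single leaf-to-root sweep; and record, for every edge $e$, the coefficient vector $(C_1(e),\dots,C_r(e))$ by tracing the $r$ fundamental tree paths. Because $r$ is fixed, $B$ has $\abs{E}=O(n)$ edges, so the full description of $g$ has size $O(rn)=O(m)$ and is produced in $O(m)$ time.

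The core step, and the main obstacle, is to minimize $g$ over $\mathbb{R}^r$ in time linear in the number $\abs{E}$ of terms. This is convex optimization in the \emph{fixed} dimension $r$: each summand is the maximum of two affine functions, so $g$ is an instance of minimizing a weighted sum of piecewise-linear convex functions, which fits the multidimensional-search framework with combinatorial dimension bounded by a function of $r$ alone. For $r=1$ this is exactly a weighted-median computation, solvable in $O(\abs{E})$ time by linear-time selection, recovering the known one-tree result. For general fixed $r$ I would invoke Megiddo's prune-and-search technique for low-dimensional linear and convex programming (or a derandomized linear-time solver for LP-type problems of bounded combinatorial dimension), which minimizes such an objective in $O(\abs{E})$ time with a constant depending only on $r$. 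Verifying that this machinery applies to the summed objective $g$, and bounding the associated combinatorial dimension by a function of $r$, is the delicate part of the argument.

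Finally, I would recover an integral optimum. The problem is a min-cost flow problem with integral data, so by total unimodularity it admits an integral optimal circulation; in the $\lambda$-parametrization this corresponds to an integral minimizer of $g$, and the fixed-dimension solver can be made to return an optimal vertex, which is integral. Reconstructing $f = f_0 + \sum_{i=1}^r \lambda_i^\star C_i$ on $E$ and appending the fixed values yields the desired min-cost circulation, with every step running in $O(m)$ time.
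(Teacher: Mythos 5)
Your proposal is correct and follows essentially the same route as the paper: parametrize the feasible circulations by the $r$ fundamental-cycle coefficients, reduce to minimizing a weighted sum of absolute values of affine forms in $\mathbb{R}^r$, and solve that in $O(m)$ time by Megiddo-style fixed-dimension search (the paper closes what you call the ``delicate part'' by citing Zemel's extension of Megiddo's algorithm to exactly this generalized $L_1$-regression objective, which is the right reference since a sum of piecewise-linear convex terms is not literally an LP-type problem). The integrality argument via the network structure matches the paper's remark as well.
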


\begin{proof}
Let $B = (V, E)$ be the undirected graph containing all the unconstrained edges, and let $R$ be the set of fixed edges. We have $G = (V, E \cup R)$. Consider a spanning tree $T$ on $B$, and let the edges in $E \setminus T$ be $e_1, \ldots, e_r$. We define $g(\lambda_1, \ldots, \lambda_r)$ to be the cost of the min-cost circulation $f$ in $G$ such that $f(e) = 1$ for $e \in R$ and $f(e_i) = \lambda_i$ for $1 \leq i \leq r$. To find the min-cost circulation, we need to solve $\min_\lambda g(\lambda)$. Note that there is an integral minimizer, as $\lambda_i$ represents the flow value on edge $e_i$, which can always be taken to be integral.

Finding the minimum is equivalent to solving the following linear program. For each $a \in R$, compute the fundamental cycle $C_a$, and let $b_e = \sum_{e \in C_a} C_a(e)$. For each $e \in E$ and $1 \leq i \leq r$, we define $A_{e, i} = C_{e_i}(e)$. 

Here, $A$ is the edge-fundamental cycle incidence matrix, which is a network matrix \cite{LIEBCHEN2007337}.

\begin{equation*}
  \begin{array}{ll@{}ll}
  \text{min}_{\lambda \in \mathbb{R}^{E \setminus T}, x \in \mathbb{R}^{T}} & \displaystyle\sum\limits_{e} x_e & \\
  \text{subject to} & x_e \geq c(e) \left| A_e \cdot \lambda + b_e \right| & \quad \forall e \in E \\
  \end{array}
\end{equation*}

Zemel showed that the minimizer of mathematical programs of the above form can be seen as a generalization of $r$-dimensional $L_1$ linear regression, which can be solved using techniques similar to Megiddo's constant-dimension linear programming algorithm \cite{Megiddo84}. Consequently, the problem can be solved in $2^{O(2^r)} m = O(m)$ time \cite{zemel_on_1984}.
\end{proof}

\subsection{Generating Circulations Near the Min-Cost Circulation}\label{sec:enum}

In this section, we show how to generate all circulations that are close to the min-cost circulation in terms of the $L_\infty$ norm.

\begin{lemma}\label{lem:linftyimpliescounting}
For an undirected graph with cycle rank $r$, the number of circulations $f$ such that $\|f\|_\infty \leq k$ is at most $(2k + 1)^r$ and these circulations can be found in $O((2k + 1)^r m)$ time.
\end{lemma}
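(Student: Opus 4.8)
The plan is to exploit the fundamental-cycle basis of the cycle space to turn each circulation into a bounded integer vector, bound the count by the size of a grid, and then enumerate that grid incrementally.

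First I would fix a spanning tree $T$ and label the non-tree edges $e_1, \ldots, e_r$, with $C_1, \ldots, C_r$ the corresponding fundamental cycles. As recalled in the preliminaries, these form a basis of the integral cycle space, so every circulation $f$ is uniquely $f = \sum_{i=1}^r \lambda_i C_i$ with $\lambda \in \mathbb{Z}^r$. The key observation is that this basis is \emph{triangular} with respect to the non-tree edges: since $C_j$ uses no non-tree edge other than $e_j$ and traverses $e_j$ once in the forward direction, we have $C_j(e_i) = 0$ for $i \neq j$ and $C_i(e_i) = 1$. Evaluating $f$ on $e_i$ therefore gives $f(e_i) = \lambda_i$, so the coefficient vector is literally read off from the flow values on the non-tree edges.

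From this the counting bound is immediate. If $\|f\|_\infty \leq k$, then $\abs{\lambda_i} = \abs{f(e_i)} \leq k$ for every $i$, so $\lambda$ lies in the grid $\set{-k, \ldots, k}^r$. Because the map $f \mapsto \lambda$ is injective (indeed $f$ is recovered from $\lambda$ via the basis expansion, and conversely $\lambda$ is recovered from $f$ by restriction to the $e_i$), the number of circulations with $\|f\|_\infty \leq k$ is at most the size of that grid, namely $(2k+1)^r$.

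For the algorithmic part I would enumerate all $\lambda \in \set{-k, \ldots, k}^r$, reconstruct $f = \sum_i \lambda_i C_i$, test whether $\|f\|_\infty \leq k$, and keep those that pass. The main obstacle is meeting the stated $O((2k+1)^r m)$ bound: reconstructing each circulation from scratch costs $O(\sum_i \abs{C_i}) = O(rm)$, a factor of $r$ too much. I would remove this factor by traversing the grid along a mixed-radix reflected Gray code, so that consecutive vectors differ in a single coordinate by $\pm 1$. Each transition then amounts to adding or subtracting one fundamental cycle $C_i$, which touches only the $O(n) = O(m)$ edges in its support; maintaining the current $f$ together with a counter of edges violating $\abs{f(e)} \leq k$ incrementally keeps each step within $O(m)$, and the validity test reduces to an $O(1)$ inspection of that counter. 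Each valid circulation can be output explicitly in $O(m)$, and there are at most $(2k+1)^r$ of them, so summing over all grid points gives the claimed $O((2k+1)^r m)$ running time.
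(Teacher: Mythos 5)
Your proposal is correct and follows essentially the same route as the paper: read the coefficients off the non-tree edges to get the $(2k+1)^r$ bound, then enumerate the coefficient grid via a Gray code so each successive circulation costs $O(m)$. The only difference is that you are slightly more careful than the paper in explicitly filtering out generated circulations whose $L_\infty$ norm exceeds $k$ on tree edges, which is a harmless refinement rather than a new idea.
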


\begin{proof}
Consider any fundamental cycle basis of $G$, a graph with cycle rank $r$, with respect to some spanning tree $T$. The values on the edges outside $T$ uniquely determine the circulation. For a circulation $f$ such that $\|f\|_\infty \leq k$, the absolute value of the flow on each edge outside $T$ is at most $k$. Hence, there can be at most $2k + 1$ choices for each of the $r$ edges outside $T$. This shows that there can be at most $(2k + 1)^r$ circulations with $L_\infty$ norm at most $k$. 

To construct all such circulations, we find a fundamental cycle basis $C_1, \ldots, C_r$. This takes $O(rm)$ time. Next, we generate circulations of the form $\sum_{i=1}^r \lambda_i C_i$ one by one where $-k \leq \lambda_i \leq k$. The sequence of circulations is generated using a generalized Gray code, so two consecutive circulations differ in only a single cycle \cite{Guan1998GENERALIZEDGC}. Therefore, it takes $O(m)$ time to generate a circulation from the previous circulation by augmenting a cycle. The total running time for generating all circulations with $L_\infty$ norm no larger than $k$ is $O((2k + 1)^r m)$.
\end{proof}

\begin{corollary}\label{cor:closecirculations}
Let $f$ be the min-cost circulation in $G$. The number of circulations $g$ in $G$ such that $\|f - g\|_\infty \leq r$ is $O((2r + 1)^r)$ and they can be found in $O((2r + 1)^r m)$ time, where $m$ is the number of edges in $G$.
\end{corollary}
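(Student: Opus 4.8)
The plan is to reduce this directly to \autoref{lem:linftyimpliescounting} by observing that the difference $g - f$ lives in the cycle space of the base graph $B$, which has cycle rank $r$. The key point is that any circulation $g$ of interest must agree with $f$ on the fixed arcs of $R$: since a tour for the SCP traverses each arc of $R$ exactly once, the circulations $g$ whose homology class $[g]$ can contain such a tour satisfy $g(a) = 1 = f(a)$ for every $a \in R$. Restricting to these circulations, the difference $\Delta = f - g$ vanishes on $R$ and is therefore a circulation supported on $E$, i.e.\ a circulation in $B$.

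First I would establish the correspondence explicitly. The map $g \mapsto \Delta = f - g$ sends the set of circulations $g$ in $G$ that are fixed to $1$ on $R$ and satisfy $\|f - g\|_\infty \le r$ bijectively onto the set of circulations $\Delta$ in $B$ with $\|\Delta\|_\infty \le r$; its inverse is $\Delta \mapsto f - \Delta$, and one checks that $f - \Delta$ is again a circulation in $G$ fixed to $1$ on $R$ precisely because $\Delta$ is zero on $R$. Since $\|f - g\|_\infty = \|\Delta\|_\infty$, the two families are in norm-preserving bijection, so counting one counts the other.

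Next I would invoke \autoref{lem:linftyimpliescounting} with the underlying graph taken to be $B$, which has cycle rank $r$, and with $k = r$. This immediately bounds the number of admissible $\Delta$ by $(2r+1)^r$ and shows they can all be enumerated in $O((2r+1)^r m)$ time via the generalized Gray code. Pushing each $\Delta$ back through the bijection, the corresponding $g = f - \Delta$ is recovered in $O(m)$ time per circulation, which is absorbed into the generation bound. This yields the claimed count $O((2r+1)^r)$ and the running time $O((2r+1)^r m)$.

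I do not expect a substantial obstacle here; the content is essentially bookkeeping around the bijection, with \autoref{lem:linftyimpliescounting} doing the real work. The one subtlety worth stating explicitly is the justification for restricting to circulations that are fixed on $R$: without it, $g - f$ would range over the full high-dimensional cycle space of $G$ rather than that of $B$, and the bound would fail. This restriction is exactly the regime supplied by \autoref{thm:diffminconn} together with \autoref{thm:elementaryproximity}, which guarantee that a min-cost tour lies in the homology class of some such $g$ with $\|f - g\|_\infty \le r$, so no relevant circulation is discarded.
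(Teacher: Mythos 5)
Your proposal is correct and follows essentially the same route as the paper: invoke \autoref{lem:linftyimpliescounting} on the base graph $B$ with $k = r$, then translate each resulting circulation back by adding it to $f$, absorbing the $O(m)$ per-circulation cost into the generation bound. The only difference is that you spell out explicitly why the difference $f - g$ lives in the cycle space of $B$ (agreement on the fixed arcs of $R$), which the paper leaves implicit.
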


\begin{proof}
By \autoref{lem:linftyimpliescounting}, there can be at most $(2r + 1)^r$ circulations with $L_\infty$ norm no larger than $r$ in $B$. These circulations can be computed in $O((2r + 1)^r m)$ time. For each such circulation $h$, we compute $g = f + h$. This takes an additional $O(m)$ time per circulation. Thus, the total time complexity is $O((2r + 1)^r m)$.
\end{proof}

\subsection{Min-Cost Tour in a Given Homology Class}\label{sec:tourinhomo}

Recall that we are interested in finding the min-cost tour in the homology class $[f]$ for some given circulation $f$ in $G = (V, E \cup R)$, where $f(a) = 1$ for $a \in R$. 

If $f$ is already connected, then we are done, as we can find a tour that uses each edge $e$ exactly $|f(e)|$ times. Otherwise, we need to find a tour in $[f]$. Note that in addition to the edges in $\supp(f)$, some extra edges might need to be used by the tour. Naturally, this would look like a Steiner Asymmetric Traveling Salesman Problem (ATSP), defined as finding the shortest tour in a directed graph that visits each terminal vertex at least once, which is even harder to work with than ATSP. However, because we are restricting the tour to be in $[f]$, this limits the kinds of edges we add to the tour: Specifically, these are edges outside $\supp(f)$ that must be traversed both forward and backward \emph{exactly once}. 

To address this, we take $G$ and contract each connected component of $\supp(f)$ into a single vertex. Let the resulting graph be $B'$. Note that only edges in $E$ might remain, as all arcs in $R$ have been contracted. Therefore, $B'$ is a graph instead of a mixed graph. Moreover, we change the weight of the edges to twice their original cost since each edge would be traversed once forward and once backward. We then reduce the problem to the minimum Steiner tree problem. 

For a graph $G = (V, E)$ with a set of terminal vertices $T \subseteq V$, the vertices in $V \setminus T$ are called \emph{Steiner vertices}. A Steiner tree is a tree that contains all the vertices in $T$. The minimum Steiner tree with respect to a weight function $w: E \to \mathbb{R}^+$ is a Steiner tree of minimum weight.

We are interested in solving the minimum Steiner tree problem on $B'$, where the terminals are the contracted vertices. This would give us the minimum set of extra edges the tour must traverse to be connected. The minimum Steiner tree problem is NP-hard in general, but note that $B'$ has only a bounded number of Steiner branch vertices. The number of Steiner branch vertices in $B'$ is at most the number of branch vertices in $B$. Indeed, the contracted vertices are not Steiner vertices, so all Steiner branch vertices must be branch vertices in $B$. The Steiner vertices of degree $1$ and $2$ can be preprocessed away, so only Steiner branch vertices remain \cite{Duin00}. Hence, the minimum Steiner tree can be solved in $O(2^k \MST(m))$ time for a graph containing $k$ Steiner branch vertices and $m$ edges \cite{Hakimi71}.

Since $B'$ can have at most $m$ edges, the minimum Steiner tree in $B'$ can be found in $O(2^k \MST(m))$ time. This leads us to the following theorem.

\begin{theorem}\label{thm:tourinhomo}
For an input mixed graph $G = (V, E \cup R)$, let $f$ be a circulation where $f(a) = 1$ for each $a \in R$. Finding a min-cost tour in $[f]$ takes $O(p + 2^k \MST(m))$ time, where $m = |E|$, $p = |R|$, and $k$ is the number of branch vertices in $B = (V, E)$.
\end{theorem}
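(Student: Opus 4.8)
The plan is to decompose a min-cost tour in $[f]$ into a \emph{forced} part that is dictated by $f$ and a \emph{connector} part whose only purpose is to make the support connected, and then to show that the cheapest connector is exactly a minimum Steiner tree in a contracted graph. First I would dispose of the easy case: if $\supp(f)$ is already connected, then orienting each edge $e$ according to the sign of $f(e)$ (and keeping the arcs of $R$) yields a connected directed multigraph in which in-degree equals out-degree at every vertex, because $f$ is a circulation. Such a multigraph admits a closed Eulerian walk traversing each edge $e$ exactly $\abs{f(e)}$ times, and this walk is a tour in $[f]$. No tour in $[f]$ can be cheaper, since every tour must realize the net flow $f(e)$ on each edge, so its cost is at least $\sum_e \abs{f(e)}\,c(e)$. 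This case therefore costs only the $O(p)$ needed to read the arcs.

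For the disconnected case I would first pin down the structure of an optimal tour. Any $W \in [f]$ realizes net flow $f(e)$ on each edge, so on $e \in \supp(f)$ it is never beneficial to exceed the forced $\abs{f(e)}$ traversals, while on $e \notin \supp(f)$ the net flow is $0$, so forward and backward traversals occur in pairs, each pair contributing $2c(e)$ to the cost and leaving the homology class unchanged. Hence an optimal $W$ consists of the forced edges plus a minimal multiset of back-and-forth pairs that glues the components of $\supp(f)$ into one connected object; and since such a pair adds one to both the in-degree and out-degree at each of its endpoints, the Eulerian balance is preserved, so any connected augmentation still admits a tour. Reusing a connector edge only adds cost once connectivity is achieved, so the cheapest augmentation is a \emph{minimum-weight} edge set connecting the components. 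After contracting each component of $\supp(f)$ to a single vertex to obtain $B'$ and doubling edge weights, this is precisely a minimum Steiner tree whose terminals are the contracted vertices. Note that because $f(a)=1\neq 0$ for every $a \in R$, all arcs lie in $\supp(f)$ and are contracted, so $B'$ is an undirected graph and the undirected Steiner-tree formulation applies.

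The running time is won in the last step, by bounding the number of Steiner branch vertices. Every contracted vertex is a terminal and hence never a Steiner vertex, so each Steiner branch vertex of $B'$ is a branch vertex of $B$, giving at most $k$ of them; degree-$1$ and degree-$2$ Steiner vertices are suppressed by the standard preprocessing of \cite{Duin00}. Invoking Hakimi's algorithm \cite{Hakimi71} then solves minimum Steiner tree in $O(2^k \MST(m))$ time, and adding the $O(p)$ for reading, contracting, and emitting the final tour gives the claimed $O(p + 2^k \MST(m))$ bound. I expect the running-time analysis to be routine; the main obstacle is the correctness of the Steiner-tree reduction — specifically, verifying that an optimal connector never needs to reuse an edge, and that the back-and-forth augmentation simultaneously preserves the homology class $[f]$ and the in-degree/out-degree balance required for a single closed tour to exist.
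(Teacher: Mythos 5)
Your proposal is correct and follows essentially the same route as the paper: contract the components of $\supp(f)$, double the edge weights, solve a minimum Steiner tree with the contracted vertices as terminals, and bound the Steiner branch vertices by the branch vertices of $B$ to invoke Hakimi's $O(2^k \MST(m))$ algorithm after the degree-$1$/degree-$2$ preprocessing. You actually supply more justification than the paper does for the correctness of the reduction (the Eulerian-balance and back-and-forth pairing arguments), which the paper leaves implicit.
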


\subsection{Putting Everything Together}

Combining all the components, we obtain the desired algorithm as described in \autoref{fig:algo}. We are now ready to prove our main theorem.

\begin{theorem}\label{thm:main}
SCP$(H)$ with an input graph of $n$ vertices and $p$ requests can be solved in $O(\MST(n) + p)$ time.
\end{theorem}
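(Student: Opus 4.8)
The plan is to assemble the three subroutines developed in \Cref{sec:mincostcirc,sec:enum,sec:tourinhomo} into the three-step Atallah--Kosaraju framework, and to account for the running time carefully so that the final bound is $O(\MST(n) + p)$ rather than the slower $O((2r+1)^r m)$ that a direct combination suggests. The key observation that drives the whole argument is that the topology $H$ is \emph{fixed}, so both the cycle rank $r$ and the number of branch vertices $k$ of the base graph $B$ are constants bounded in terms of $H$ alone; consequently the factors $(2r+1)^r$, $2^{O(2^r)}$, and $2^k$ are all $O(1)$. The correctness of the framework itself rests on the proximity guarantee: by \autoref{thm:diffminconn} the min-cost tour lies in $[g]$ for some $g$ with $g-f$ elementary in $B$, and by \autoref{thm:elementaryproximity} such a $g$ satisfies $\|f-g\|_\infty \le r$. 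Hence it suffices to search only among circulations within $L_\infty$-distance $r$ of the min-cost circulation $f$.

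The steps I would carry out, in order, are as follows. First, compute the min-cost circulation $f$ of the corresponding circulation problem using \autoref{thm:mincostcycle}; since the unconstrained edges form a graph of fixed cycle rank $r$, this takes $O(m + p)$ time where $m=|E|$. Second, invoke \autoref{cor:closecirculations} to enumerate all circulations $g$ with $\|f-g\|_\infty \le r$; there are $O((2r+1)^r) = O(1)$ of them, and they are produced in $O((2r+1)^r m) = O(m)$ total time. Third, for each such $g$, apply \autoref{thm:tourinhomo} to find the min-cost tour in $[g]$ in $O(p + 2^k \MST(m))$ time, and return the cheapest tour found over all $g$. Since there are only $O(1)$ classes to examine, the total cost of the third step is $O(p + 2^k \MST(m)) = O(p + \MST(m))$.

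The main obstacle is a bookkeeping subtlety that could inflate the bound from the desired $\MST(n)$ to $\MST(m)$ or worse. The base graph $B$ is topologically equivalent to $H$, so it is a subdivision of $H$ and may contain many degree-$2$ vertices; thus $m = |E|$ and $n = |V|$ can both be much larger than the constant size of $H$, and naively $m$ and $n$ are of the same order. I would argue that the subdivided paths (maximal paths of internal degree-$2$ vertices) can be handled so that the relevant MST computations and min-cost-circulation computations run in time $O(\MST(n) + p)$: the number of edges $m$ in a simple graph on $n$ vertices is $O(n^2)$ in general, but a graph topologically equivalent to a fixed $H$ has $m = n + r - 1 = O(n)$ since its cycle rank $r$ is constant, so $\MST(m) = \MST(O(n))$ and the enumeration term $O(m)$ is $O(n)$. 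Combining the three steps then yields a total running time of $O(m + p) + O(m) + O(p + \MST(m)) = O(\MST(n) + p)$, using $\MST(m)=\Omega(m)=\Omega(n)$ to absorb the linear terms. Writing this cleanly — and in particular confirming that $m = O(n)$ for the fixed-topology base graph so that every $m$ becomes an $n$ — is the one place where care is genuinely required; the rest is a direct substitution of the prior lemmas.
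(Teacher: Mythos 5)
Your proposal is correct and follows essentially the same route as the paper's proof: compute the min-cost circulation via \autoref{thm:mincostcycle}, enumerate the $O((2r+1)^r)=O(1)$ nearby circulations via \autoref{cor:closecirculations}, solve each homology class via \autoref{thm:tourinhomo}, and use $m = n + r - 1 = O(n)$ to convert $\MST(m)$ into $\MST(n)$. Your explicit invocation of \autoref{thm:diffminconn} and \autoref{thm:elementaryproximity} for correctness is a welcome addition that the paper leaves implicit in its own proof of the theorem.
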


\begin{proof}
Let $B$ be the base graph and $R$ be the set of requests. For a fixed topology, the cycle rank $r$ and the number of branch vertices $k$ are constants. If $m$ is the number of edges in $B$, then $m = n + r - 1 = O(n)$.

Consider the algorithm in \autoref{fig:algo}. By \autoref{thm:mincostcycle}, computing the min-cost circulation takes $O(m + p)$ time. By \autoref{cor:closecirculations}, enumerating all $O(1)$ possible circulations close to the min-cost circulation takes $O(m)$ time. By \autoref{thm:tourinhomo}, finding the min-cost tour in the homology class of each circulation takes $O(\MST(m) + p)$ time. Using the fact that $m = O(n)$, we obtain a final running time of $O(\MST(n) + p)$.
\end{proof}

The running time in \autoref{thm:main} is optimal. The term $\MST(n)$ is unavoidable, as finding the minimum spanning tree on $n$ edges can be reduced to SCP$(P)$ in linear time, where $P$ is a path \cite{Frederickson93}. Moreover, since reading the input requires at least $O(p)$ time, the term $p$ is also tight.

Now, in terms of FPT, what we showed is that the SCP problem with the cycle rank and the number of branch vertices as parameters is FPT.

      \begin{figure*}[ht]
        \begin{center}
        \begin{algo}
          \textul{$SCP(B=(V,E), R, c)$:}\+
          \\  $G\gets (V,E\cup R)$
          \\  $f\gets $ the min-cost circulation
          \\  $\mathcal{C} \gets$ all circulations $g$ such that $\|f-g\|_\infty\leq r$ 
          \\  for $g \in \mathcal{C}$\+
          \\      record $\textsc{MinCostTourInHomologyClass}(G,g,c)$ as a candidate\-
          \\  return the min-cost candidate
          \-
        \end{algo}
        \end{center}
        \caption{Pseudocode for solving $SCP$ given base graph $B$, requests $R$ and cost function $c$.}
        \label{fig:algo}
      \end{figure*}

      \begin{figure*}[ht]
        \begin{center}
        \begin{algo}
          \textul{$\textsc{MinCostTourInHomologyClass}(G, f, c)$:}\+
          \\  \Comment{Find the min-cost tour in $[f]$}
          \\  $H\gets $ contract all components of $\supp(f)$ in $G$
          \\  for each edge $e\in V(H)$\+
          \\       $w(e)\gets 2c(e)$\-
          \\  $T\gets $ the vertices from contracting a component of $\supp(f)$
          \\  $S\gets$ Minimum Steiner tree on $H$ with weights $w$ and terminals $T$
          \\  $W\gets $ construct a tour from $S$ and $f$
          \\  return $W$
          \-
        \end{algo}
        \end{center}
        \caption{Finding the min-cost tour for a given graph $G$ and a circulation $f$ on $G$}
        \label{fig:opttour}
      \end{figure*}

      \section{Discussion}

      We simplified the presentation of our results, but our approach can be extended in multiple ways without increasing the time complexity.
      
      As mentioned earlier in the preliminaries, we can handle duplicate arcs in $R$ without altering the complexity of the algorithm. Indeed, this is done by allowing demands on arcs in $R$. Specifically, we can introduce a function $d: R \to \mathbb{N}$ such that, for each $a \in R$, the tour is required to traverse $a$ exactly $d(a)$ times. This generalization can be handled by simply fixing the value of the flow to be $d(a)$ for each $a\in R$ during our computation.
      
      Our current cost function is symmetric; for an edge $e = uv$, traversing from $u$ to $v$ or from $v$ to $u$ incurs the same cost. However, our algorithm can be adapted to work with asymmetric cost functions as well by splitting the edge into two edges of opposite orientation with different costs and only allowing non-negative flows. During Steiner tree computations, merge the two edges and give the merged edge a weight equal to the cost of traversing the edge forward and backward once.
      
      
      Although we provide an optimal algorithm, the constant factor hidden in the time complexity is extremely large, making it impractical for real-world applications. 
      If the fixed topology has cycle rank $r$ and $k$ branch vertices, then the hidden factor is $2^{O(2^r)} + (2r + 1)^r \cdot 2^k$. The bottleneck of $2^{O(2^r)}$ arises from solving the min-cost flow problem on a graph that is a tree with $r$ additional edges. In practice, one can use theoretically slower, off-the-shelf min-cost flow algorithms to avoid the exponential dependency on $r$. However, there are two potential avenues for improvement:
      
      \begin{enumerate}
        \item The problem is very close to linear programming in constant dimensions, so techniques other than Megiddo's algorithm might be beneficial for improving the running time. 
        \item In the proof of \autoref{thm:mincostcycle}, the matrix $A$ is a \emph{network matrix}, but Zemel's algorithm is designed for arbitrary matrices. This presents another opportunity for optimization.
      \end{enumerate}
      The factor $(2r + 1)^r$ is unlikely to be significantly improved if we need to enumerate all elementary circulations, as their number can be exponential with respect to $r$. However, there may be strategies to consider only a much smaller subset of elementary circulations. For example, our enumeration of elementary circulations is independent of cost information, which could potentially be leveraged to reduce the search space.

  \paragraph*{Acknowledgements} We would like to thank Siyue Liu and Jianbo Wang for the initial discussions related to this problem. Chao would like to thank Zexuan Liu and Luze Xu for their discussions on the proximity results of a related integer program.

\bibliographystyle{plain}
\bibliography{graph_ride}
\end{document}